\documentclass[pra,aps,twocolumn,superscriptaddress, longbibliography, notitlepage]{revtex4-1}
\usepackage[colorlinks=true, citecolor=blue, urlcolor=blue ]{hyperref}
\usepackage{graphicx}
\usepackage{dcolumn}
\usepackage{bm}
\usepackage{ragged2e}
\usepackage{amsmath,amsfonts}
\usepackage[english]{babel}
\usepackage{amsthm}
\usepackage{hyperref}
\theoremstyle{plain}

\newtheorem{theorem}{Theorem}

\usepackage[caption=false]{subfig}
\usepackage{xcolor}
\begin{document}

\title{Sequential detection of genuine multipartite entanglement is unbounded for entire hierarchy of number of qubits recycled}



\author{Chirag Srivastava}
\affiliation{Harish-Chandra Research Institute, A CI of Homi Bhabha National Institute, Chhatnag Road, Jhunsi, Prayagraj 211 019, India}
\author{Mahasweta Pandit}
\affiliation{Institute of Theoretical Physics and Astrophysics, Faculty of Mathematics, Physics and Informatics, University of Gda\'nsk, 80-308 Gda\'nsk, Poland}
\author{Ujjwal Sen}
\affiliation{Harish-Chandra Research Institute, A CI of Homi Bhabha National Institute, Chhatnag Road, Jhunsi, Prayagraj 211 019, India}



\begin{abstract}
Experimental detection of entanglement certainly disturbs the underlying shared state. It is possible that the entanglement content of the system is lost in the process of its detection. This observation has led to the study of sequential detection properties of various quantum correlations. 
Here, we investigate the sequential detection of genuinely multipartite entanglement of quantum systems composed of an arbitrary number of qubits. In order to detect genuine multipartite entanglement sequentially, observers can  recycle any fixed subset of all the qubits, thus leading to a hierarchy of scenarios, categorized according to the number of qubits which are recycled by the observers. We show that the sequential detection of genuine multipartite entanglement, for every scenario in the hierarchy, leads to an unboundedly long sequence. This is shown to be possible if the initial state shared among the observers is the multipartite generalized Greenberger-Horne-Zeilinger state and is a class of mixed states. A comparison among different hierarchical scenarios is drawn based on the number of sequential detections of genuine multipartite entanglement for a specific measurement strategy employed by observers.
\end{abstract}

\maketitle

\section{Introduction}

Entanglement \cite{horodecki09,Guhne09} is one of the most important signatures of non-classicality, and the notion of genuine multipartite entanglement (GME) becomes even conceptually engaging with an exponentially increasing number of state parameters in the fray with the number of parties sharing the state. Constructing entanglement witnessing tools in the multipartite scenario is therefore crucial for characterization of such correlations as well as to find novel ways to utilise them to extract quantum advantages in different quantum information processing and communication tasks \cite{Ekert91, bb84,Wiesner92,Bennett93,Zukowski93, Bose98,Mayers98,Barrett05,Acin07, Colbeck12,Colbeck111,Colbeck112,Pironio10,Cleve99,Zhao04,Gour07,Raissi18,Andrea21}. Witnessing GME remains a very demanding task \cite{Zuku03,Weinfurter04,GISIN19981,Scarani02,guhneex03,Haffner05,Krammer09,Li10,Jungnitsch11,Brunner12,Sperling13,Knips16,Singh18,Friis19,Zhou19,Teh19,Amaro20}.  

Recyclability of non-classical correlations refers to the concept of witnessing them in a sequential manner by conducting local measurements on one (or more) of the sub-systems and passing down the post-measurement sub-system(s) to the next observer. The task is to sequentially witness correlations, e.g., the violation of the Clauser-Horne-Shimony-Holt (CHSH) \cite{Clauser69} Bell inequality \cite{Silva15,Brown20} or entanglement \cite{srivastava21,pandit22,Srivastava22}.
For related works, see e.g. \cite{Mal16,Bera18,Das19,Saha19,Maity20,sriv21,roy20,cabello20,ren21,Fei21,zhu21,Das21}.
For experiments on these directions, see e.g. \cite{Schiavon17,Hu18,Vallone20,Choi20,Feng20}.
Using bipartite systems where only one of the subsystems is being recycled, both CHSH Bell-nonlocal correlation \cite{Brown20} and entanglement (for CHSH local states as well  \cite{srivastava21}) can be witnessed $arbitrarily$ many times. 
For the case where both the subsystems are recycled, it has been conjectured that not more than a single pair of observers can violate the CHSH Bell-nonlocality \cite{Cheng21,Hall21}, whereas the number of times entanglement can be witnessed is unbounded \cite{pandit22}.
Recently, it has also been reported that if only a single subsystem is recycled of an arbitrary $N$-partite GME state, then the number of sequential detections of GME is also unbounded \cite{Srivastava22}. 

In the current work, we investigate a general scenario for sequential detection of genuinely multipartite entanglement of a multi-qubit system.
Specifically, a set of spatially separated observers begin with an initial genuinely multipartite entangled state such that each observer possesses a single qubit. 
The genuinely multipartite entangled state is then subjected to measurements by each observer to witness GME of the shared state. Any fixed subset of the post-measurement qubits are then recycled, i.e., passed on to a next set of observers for another set of local measurements. The same subset may again be passed on to another subsequent set of observers for another set of local measurements, and so on. One can christen such a scheme of recycling of qubits as \emph{hierarchical recycling}, where the cardinality of the subset of observers executing the recycling forms the different levels of hierarchy. Utilizing specially engineered measurement strategies, we show that the number of sequential GME witnessing is unbounded for hierarchical recycling of qubits, for any cardinality in the hierarchy. We prove the statement for the multipartite Greenberger-Horne-Zeilinger \cite{GHZ,Mermin,Dik99} (GHZ) state, multipartite generalized GHZ states, and a class of mixed genuinely multipartite entangled states. 
We also compare different scenarios of hierarchical recycling based on the number of sequential detections of GME, where a specific measurement strategy is employed by the observers.   
We find that there exists a measurement strategy for which the total number of instances of witnessing GME is independent of the total number of qubits if only a single qubit of multipatite GHZ state is recycled.  
\section{Scenario}
Consider $N(\geq 3)$ spatially separated observers, sharing an $N$-qubit genuinely multiparty entangled state, $\rho_1$, such that each party has a single qubit. Let us suppose that a fixed set of $N_0 (=1,2,\ldots N)$ qubits can be recycled by sequential observers. The observers, possessing the recyclable qubits, perform measurements  on their subsystems and transfer them to the next set of $N_0$ observers,  who act on their qubits, and pass them to subsequent set of $N_0$ observers, and so on. Let an observer, possessing the $m^\text{th}$ qubit which has been acted on by $k-1$  observers earlier, is denoted as  $O_k^m$.  The task of any $k^{th}$ set of sequential observers (set of all observers who will act on their qubit the $k^{\text{th}}$ time) is to detect genuine multiparty entanglement with the rest of the $N-N_0$ observers who act only once on their qubits. 
It is also assumed that the sequential sets of observers act independently of each other, i.e., the information about which outcome has ``clicked'' in the measurements performed by a given set of observers are not passed to the subsequent set of observers.
Note that $N_0$ can be any integer from 1 to $N$, which captures a hierarchy of scenarios starting from the case where only one fixed qubit can be manipulated and recycled by sequential observers, to the case where all of the qubits are recycled by sequential observers. We term the collection of all such recycling scenarios as \emph{hierarchical recycling}.

In order to detect genuine multipartite entanglement, local observers may employ the method of genuine multipartite entanglement witnesses \cite{Zuku03,Weinfurter04,GISIN19981,Scarani02,guhneex03,Toth05}. In this method, the expectation  value of the witness operator is non-negative for all biseparable states and negative for at least one genuine multipartite entangled states. In \cite{Toth05}, it is shown that there exist genuine multiparty entanglement witness operators such that only two measurement settings per local observer, with dichotomic outcomes, are sufficient to detect the genuinely multiparty entangled states of an arbitrary long array of qubits.
Let the positive operator-valued measure (POVM) corresponding to the $i^{\text{th}}$ measurement setting, employed on the $m^{\text{th}}$ qubit for $m=1,2,\ldots N_0$ by the $k^{\text{th}}$ set of sequential observers, be denoted by $\{E^{k}_{i_m,a_m}\}_{a_m}$, where $i_m=\{x,z\}$ denotes the two measurement settings for every site, $a_m=\{+,-\}$ denotes set of two outcomes for every setting, $\{E^k_{i_m,a_m}\}_{a_m}$ are positive operators, and $\sum_{a_m}E^k_{i_m,a_m}=\mathbb{I}_2$ (identity on the Hilbert space of dimension 2).  
Let $\rho_k$ be the $N$-qubit state shared by the $k^{\text{th}}$ set of sequential observers along with the rest of $N-N_0$ (non-sequential) observers. Then the post-measurement state after the measurements by $k^{th}$ set of sequential observers can be expressed using the von Neumann-L\"uder's rule \cite{Busch} as
\begin{equation}\label{salt}
    \rho_{k+1}=\frac{1}{2^{N_0}}\sum_{I,A} \left(\bigotimes_{m=1}^{N_0} E^{k}_{i_{m},a_{m}}\right)^{\frac{1}{2}}
    \rho_k \left(\bigotimes_{m=1}^{N_0} E^{k}_{{i_{m},a_{m}}}\right)^{\frac{1}{2}},
\end{equation}
where $I = \{i_1,i_2,\ldots, i_{N_0}\}$ is the set of all measurement settings of the $N_0$ observers performing sequential measurements, and $A = \{a_1,a_2,\ldots, a_{N_0}\}$ is the set of all outcomes of the measurements. 
\begin{figure}
    \includegraphics[width =0.48\textwidth]{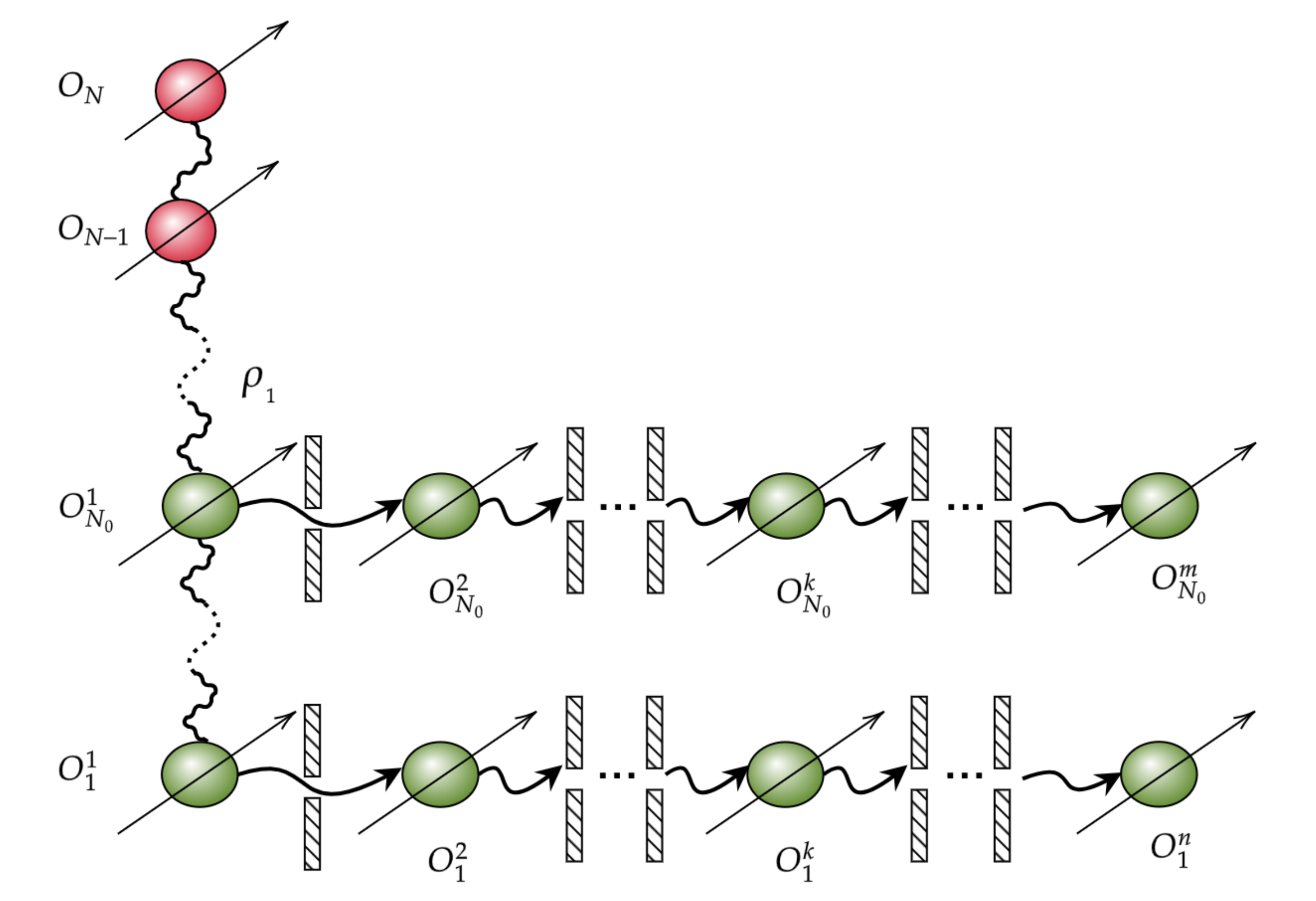}
     \caption{
      Schematic set-up for sequential detection of genuine multipartite entanglement for hierarchical recycling. $N$ spatially separated parties share an $N$-qubit quantum state. Any fixed $N_0~(=1,2,\ldots,N)$ qubits are recycled by sequential observers to detect genuine multipartite entanglement with the rest of $N-N_0$ non-sequential observers. The question is how many such sequential detections of genuine multipartite entanglement are possible for various values of $N_0$, for a given shared state.
     }
          \label{fig:kataar} 
\end{figure}
The summation over all measurement settings, $I$, is the consequence of independence of the observers acting sequentially, whereas the prefactor $\frac{1}{2^{N_0}}$ accounts for the unbiasedness of the measurement settings employed by the observers.
Note that in Eq. \eqref{salt}, it is assumed, without loss of generality, that the set of sequential observers act on the first $N_0$ qubits, so the operator identity acts on the rest of $N-N_0$ qubits. See Fig. \ref{fig:kataar}.






\section{Arbitrarily long sequential detection of genuine multipartite entanglement}
In this section, we show that the number of sequential detections of the genuine multipartite entanglement can be unbounded under the hierarchical recycling scenario, for every member of the hierarchy. For this purpose, we first consider that an $N$-partite GHZ state, 
\begin{equation}
    |\text{GHZ}_N\rangle=\frac{1}{\sqrt{2}}\left(|00\ldots0\rangle_N+|11\ldots1\rangle_N\right), \nonumber
\end{equation}
shared by $N$ spatially separated parties. Therefore, $\rho_1=|\text{GHZ}_N\rangle\langle \text{GHZ}_N|$,  where $|0\rangle$ and $|1\rangle$ are the eigenstates of Pauli operator, $\sigma_z$. We remember that $\rho_1$ denotes the initial multiparty shared state.
\subsection*{Measurement strategies, GME witnesses, and post-measurement state}
A genuine multiparty entanglement witness for detecting the $N$-partite GHZ state, for any $N$, with only two measurement settings per local observer is provided in \cite{Toth05}. 
In order to increase the number of sequential observers detecting these resources, it is essential to use a modified witness operator such that the sharp projective measurements are replaced by unsharp POVMs. 

Suppose that the $k^{\text{th}}$ set of sequential observers apply the following POVMs on any $m^{\text{th}}$ qubit for $m=1,2, \ldots,N_0$:
\begin{eqnarray}\label{sodium}
       E^k_{x,\pm}&=&\frac{\mathbb{I}_2 \pm \lambda_k \sigma_x}{2}, \nonumber \\
       E^k_{z,\pm}&=&\frac{\mathbb{I}_2 \pm \sigma_z}{2},
\end{eqnarray}
where $0 \leq \lambda_k\leq 1$ is the sharpness parameter of the measurement $\{E^k_{x,\pm}\}$. The case corresponding to $\lambda_k=1$ is a sharp projective measurement, whereas $\lambda_k=0$ corresponds to the trivial measurement where the post-measured state is completely undisturbed. The observers, acting only once on rest of the $N-N_0$ qubits, perform the following measurements on their respective qubits:
\begin{eqnarray}\label{chlorine}
       P_{x,\pm}&=&\frac{\mathbb{I}_2 \pm \sigma_x}{2}, \nonumber \\
       P_{z,\pm}&=&\frac{\mathbb{I}_2 \pm \sigma_z}{2}.
\end{eqnarray}
Consider now a GME witness operator, which is a modified version of the witness operator provided in \cite{Toth05}, used by the $k^\text{th}$ set of sequential observers and the $N-N_0$ non-sequential observers, given by
\begin{equation}\label{kapda}
    W_k=3\mathbb{I}-2\left[\frac{\mathbb{I}+\lambda_k^{N_0}S_1}{2}+ \prod_{m=2}^N\frac{\mathbb{I}+S_m}{2}\right],
\end{equation}
where $\mathbb{I}$ is the identity operator on $2^N$-dimensional Hilbert space and the operators $S_m$ are given as
\begin{eqnarray}
       S_1&=&\prod_{m=1}^N \sigma_x^{(m)}  \nonumber \\
       S_m&=&\sigma^{(m-1)}_z  \sigma^{(m)}_z, ~~m=2,3,\ldots,N,
   \end{eqnarray}
  where the superscript, $(m)$, over any operator indicates that it acts on the $m^{\text{th}}$ qubit.
The expectation value of the witness operator, $\langle W_k \rangle$, is non-negative for all biseparable states, which easily follows from the proof given in \cite{sriv22}, since $0\leq \lambda^{N_0}_k \leq 1$. Whereas, $\langle W_k \rangle_{\rho_1} < 0$ for $\lambda_k>0$, since
$\langle  S_m \rangle_{\rho_1}=1$.
Note that the measurements, defined in \eqref{sodium} and \eqref{chlorine}, are sufficient to evaluate  $\langle W_k \rangle$ over any $N$-qubit state. 
  

The updated state shared by the $k^{\text{th}}$ set of sequential observers and the non-sequential observers, due to the measurement strategy adopted by the $(k-1)^{\text{th}}$ set of sequential observers is given by,
\begin{widetext}
\begin{eqnarray}\label{choir} \rho_k&=&\frac{1}{2^{2N_0}}\sum_I\sum_{s=0}^{N_0} \sum_{\phi=1}^{\tbinom{N_0}{s}} \Xi^{N_0}_{\phi,s,I} \mathcal{P}_\phi[\sigma_{i_1}\otimes\sigma_{i_2}\otimes\ldots\otimes\sigma_{i_s}\otimes\mathbb{I}_{2^{N_0-s}}]~\rho_{k-1}~\mathcal{P}_\phi[\sigma_{i_1}\otimes\sigma_{i_2}\otimes\ldots\otimes\sigma_{i_s}\otimes\mathbb{I}_{2^{N_0-s}}], \end{eqnarray}
where $\{\mathcal{P}_\phi[\sigma_{i_1}\otimes\sigma_{i_2}\otimes\ldots\otimes\sigma_{i_s}\otimes\mathbb{I}_{2^{N_0-s}}]\}_{\phi=1}^{\binom{N_0}{s}}$ is the set of all possible instances of choosing $s$ sites among $N_0$, and planting the corresponding Pauli operators there and identity in the remaining $N_0-s$ sites. Also,
\begin{equation}
\Xi^{N_0}_{1,s,I}=(1-\Lambda^{i_1}_{k-1})(1-\Lambda^{i_2}_{k-1})\ldots(1-\Lambda^{i_s}_{k-1})(1+\Lambda^{i_{s+1}}_{k-1})\ldots(1+\Lambda^{i_{N_0-1}}_{k-1})(1+\Lambda^{i_{N_0}}_{k-1}) ,\nonumber 
\end{equation}\end{widetext}
such that the coefficient is $(1-\Lambda^{i_m}_{k-1})$ for a Pauli operator acting on the $m^{\text{th}}$ site in  Eq.~\eqref{choir}, and it is $(1+\Lambda^{i_m}_{k-1})$ for an identity. And similarly for other values of \(\phi\).
Furthermore, $\Lambda^x_k=\sqrt{1-\lambda^2_k}\equiv \Lambda_k$ and $\Lambda^z_k=0$ for any $k \in \mathbb{N}$ (the set of natural numbers).
The updated state in Eq. \eqref{choir} can be obtained using Eq.~\eqref{salt} and the following identity:
\begin{equation}
\begin{split}
    &\left(\frac{\mathbb{I}_2\pm \lambda_k\sigma_i}{2}\right)^{\frac{1}{2}} = \\ &\frac{\left(\sqrt{1+\lambda
_k}+\sqrt{1-\lambda_k}\right)\mathbb{I}_2\pm\left(\sqrt{1+\lambda_k}-\sqrt{1-\lambda_k}\right)\sigma_i}{2\sqrt{2}}.
\end{split}
\end{equation}

\subsection*{Sequential detection of GME using hierarchical recycling of multipartite GHZ state}
Let us define a set of $N$-qubit operators, $S^{q}_{2\theta,t}$, as the tensor product of $2\theta$ $\sigma_z$ operators and identity operators, where $t$ is the number of $\sigma_z$ operators acting on the space of $N-N_0$ qubits possessed by the non-sequential observers, with
$\theta=1,2,\ldots,\lfloor\frac{N}{2}\rfloor$ and $t=\tau_s,\tau_s+1,\ldots, \tau_l$, where $\tau_s = \max\{0,2\theta-N_0\}$ and $\tau_l=\min\{2\theta, N-N_0\}$.  The index $q$ is used to denote different operators arising from  the different arrangements of the tensor products of $\sigma_z$ and identity operators. 
For a given $2\theta$ and $t$, the number of possible $q$ is $\binom{N_0}{2\theta-t} \binom{N-N_0}{t}$.
Notice that the witness operator in Eq.~\eqref{kapda} is a linear combination of operators, $\mathbb{I}$, $S_1$, and all possible $S^q_{2\theta,t},$ viz.,
  \begin{equation}\label{yaad}
      W_k=\left(2-\frac{1}{2^{N-2}}\right)\mathbb{I}-\lambda_k^{N_0}S_1-\frac{1}{2^{N-2}}\sum_{\theta,t,q} S_{2\theta,t}^q
  \end{equation}
Thus, the expectation value of the witness operator, $W_k$, can be obtained if the expectation values of the operators $S_1$ and $S^q_{2\theta,t}$ are obtained. 
It is possible to express the expectation values $\langle S_1\rangle_{\rho_k}$ and $\langle S^q_{2\theta,t}\rangle_{\rho_k}$ in terms of $\langle S_1\rangle_{\rho_1}$ and $\langle S^q_{2\theta,t}\rangle_{\rho_1}$, respectively, viz.,
\begin{eqnarray}\label{dastan}
    \langle S_1\rangle_{\rho_k}&=&\left(\frac{1}{2^{N_0}}\right)^{k-1} \langle S_1\rangle_{\rho_1}, \nonumber \\
    \langle S^q_{2\theta,t}\rangle_{\rho_k}&=&\prod_{j=1}^{k-1}\left(\frac{1+\Lambda_j}{2}\right)^{2\theta-t}\langle S^q_{2\theta,t}\rangle_{\rho_1}.
\end{eqnarray}
Now the condition that the first set of observers in the sequence will detect GME is given by $\langle W_1 \rangle_{\rho_1}<0$
\begin{equation}\label{muhurt}
    \implies\lambda^{N_0}_1>0,
\end{equation}
since  $\langle S_1\rangle_{\rho_1}=\langle S^q_{2\theta,t}\rangle_{\rho_1}=1$, as $\rho_1=|\text{GHZ}_N\rangle\langle \text{GHZ}_N|$.
Expectation values given in Eq. \eqref{dastan}  can be used to determine $\langle W_k\rangle_{\rho_k}$ and the condition for detecting GME by the $k^{th}$ set of sequential observers, for any $k$, is given by $\langle W_k \rangle_{\rho_k}<0$,
\begin{eqnarray}\label{lamha}
\implies &\lambda^{N_0}_k&>\frac{2^{N_0 (k-1)}}{2^{N-2}}\Bigg[2^{N-1}-1  \nonumber \\ &-&\sum_{\theta,t}\tbinom{N_0}{2\theta-t} \tbinom{N-N_0}{t} \prod_{j=1}^{k-1} \left(\frac{1+\Lambda_j}{2}\right)^{2\theta-t}\Bigg].
\end{eqnarray}
Note that the summation over the index $q$ in the above expression is removed as the expectation values of operators $S^q_{2\theta,t}$ is independent of $q$ for the GHZ state, and equals 1.
The inequalities in \eqref{muhurt} and \eqref{lamha} can be used to define the following sequence of sharpness parameters:
  \begin{eqnarray}\label{raaz}
&\lambda^{N_0}_k& :=
      (1+\epsilon) \frac{2^{N_0 (k-1)}}{2^{N-2}}\Bigg[2^{N-1}-1 \nonumber \\ &-&\sum_{\theta,t}\tbinom{N_0}{2\theta-t} \tbinom{N-N_0}{t}\prod_{j=1}^{k-1} \left(\frac{1+\Lambda_j}{2}\right)^{2\theta-t}\Bigg],  
\end{eqnarray}
if $\lambda^{N_0}_{k-1}\in (0,1)$ and undefined otherwise,
with $\lambda^{N_0}_1>0$ and $\epsilon>0$. Genuine multipartite entanglement is detected by the  $k^{\text{th}}$ set of sequential observers if the sharpness parameter $\lambda_k \in (0,1)$.  
We can now arrive at the following theorem.
\begin{theorem}\label{raahe}
The number of sequential detections of genuine multipartite entanglement starting from a multipartite GHZ state is unbounded for entire hierarchy of number of qubits recycled. \end{theorem}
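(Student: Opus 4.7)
The plan is to prove the theorem by showing that, for any prescribed length $K\in\mathbb{N}$, the free parameter $\lambda_1\in(0,1)$ in the recursion~\eqref{raaz} can be chosen small enough so that $\lambda_k\in(0,1)$ holds for every $k\le K$. Since $K$ is arbitrary, this establishes that the number of sequential detections is unbounded.

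The analytic engine of the argument is the observation that the bracket on the right-hand side of Eq.~\eqref{raaz} vanishes as the preceding sharpnesses $\lambda_1,\ldots,\lambda_{k-1}$ tend to zero. Indeed, if each $\lambda_j\to 0^+$, then $\Lambda_j=\sqrt{1-\lambda_j^2}\to 1$ and $r_j=(1+\Lambda_j)/2\to 1$, so every factor $r_j^{2\theta-t}$ in the sum approaches $1$. Since $\sum_{\theta,t}\tbinom{N_0}{2\theta-t}\tbinom{N-N_0}{t}$ equals $2^{N-1}-1$ (the count of nonempty, even-weight subsets of an $N$-set), the bracket collapses to zero. A convenient way to make the decay quantitative is to regroup the sum by $u=2\theta-t$ and to use $\sum_{t\text{ even}}\binom{N-N_0}{t}=\sum_{t\text{ odd}}\binom{N-N_0}{t}=2^{N-N_0-1}$, which yields (for $N_0<N$) the closed form
\begin{equation}
\sum_{\theta,t}\tbinom{N_0}{2\theta-t}\tbinom{N-N_0}{t}R_{k-1}^{2\theta-t} = 2^{N-N_0-1}(1+R_{k-1})^{N_0}-1,
\end{equation}
with $R_{k-1}=\prod_{j=1}^{k-1}r_j$. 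The bracket then becomes $2^{N-1}-2^{N-N_0-1}(1+R_{k-1})^{N_0}$, manifestly $O(1-R_{k-1})$ near $R_{k-1}=1$; the case $N_0=N$ admits an analogous closed-form reduction involving $(1\pm R_{k-1})^N$.

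The theorem then follows by induction on $k$. The base case $k=1$ is immediate. For the inductive step, the hypothesis supplies a threshold $\delta_k>0$ such that $\lambda_1<\delta_k$ makes each of $\lambda_1,\ldots,\lambda_k$ arbitrarily small; combining this with the continuity of the recursion in $\lambda_1$ (inherited from the continuity of the elementary functions involved) forces $\lambda_{k+1}^{N_0}\to 0$ as $\lambda_1\to 0^+$, so some $\delta_{k+1}\le\delta_k$ guarantees $\lambda_{k+1}\in(0,1)$. The main obstacle I anticipate is quantitative: the prefactor $2^{N_0(k-1)}$ in Eq.~\eqref{raaz} grows exponentially in $k$, so the bracket must decay at least that fast. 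Via the estimate $1-R_{k-1}\approx\tfrac14\sum_{j=1}^{k-1}\lambda_j^2$, this forces $\lambda_1$ to be taken exponentially small in $K$, but such a choice always exists, which is all the theorem requires.
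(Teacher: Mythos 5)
Your proposal is correct and follows essentially the same route as the paper's own proof: an induction on $k$ showing that the bracket in Eq.~\eqref{raaz} vanishes as $\lambda_1\to 0^+$ (via $\sum_{\theta,t}\binom{N_0}{2\theta-t}\binom{N-N_0}{t}\to\sum_\theta\binom{N}{2\theta}=2^{N-1}-1$), so that for any prescribed sequence length a sufficiently small $\lambda_1$ keeps every $\lambda_k$ in $(0,1)$. Your closed-form evaluation $2^{N-N_0-1}(1+R_{k-1})^{N_0}-1$ of the binomial sum is a useful quantitative refinement of the paper's purely limiting argument, but it does not change the underlying strategy.
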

\begin{proof}
Notice that $\lambda^{N_0}_2 \to 0$ in the limit $\lambda_1 \to 0,$ because
\begin{eqnarray}
\lim_{\lambda_1\to0}\sum_{\theta,t} \tbinom{N_0}{2\theta-t} \tbinom{N-N_0}{t} \left(\frac{1+\Lambda_1}{2}\right)^{2\theta-t}&=& \sum_{\theta} \tbinom{N}{2\theta} \nonumber \\  &=& 2^{N-1}-1. \nonumber 
\end{eqnarray}
This implies $\lambda_2 \to 0$, since $\lambda^{N_0}_2 \to 0$ for an arbitrary $N_0 \leq N$, with $N$ being also arbitrary. 

Now assume that  $\lambda_j \to 0, ~ \forall j=1,2,\ldots,k-1$. It is easy to check that this implies $\lambda_k \to 0$ and thus  $\lambda_k \in (0,1),~ \forall k\in \mathbb{N}$. Hence, it is proved that an unbounded number of sequential detections of GME is possible for the entire hierarchy.
\end{proof}
\begin{figure*}
\subfloat{\includegraphics[width=0.45\textwidth]{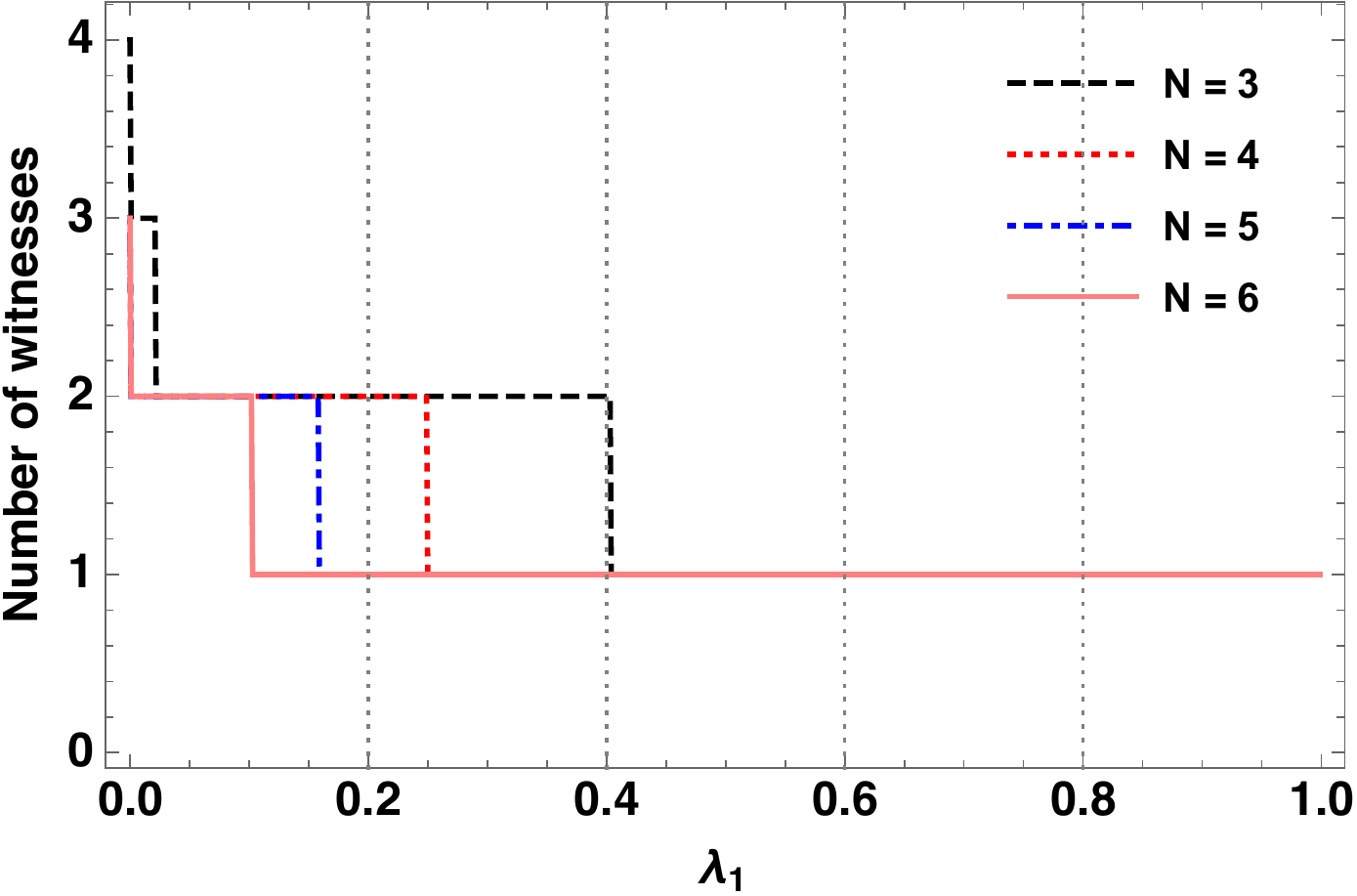}}\qquad
\subfloat{\includegraphics[width=0.45\textwidth]{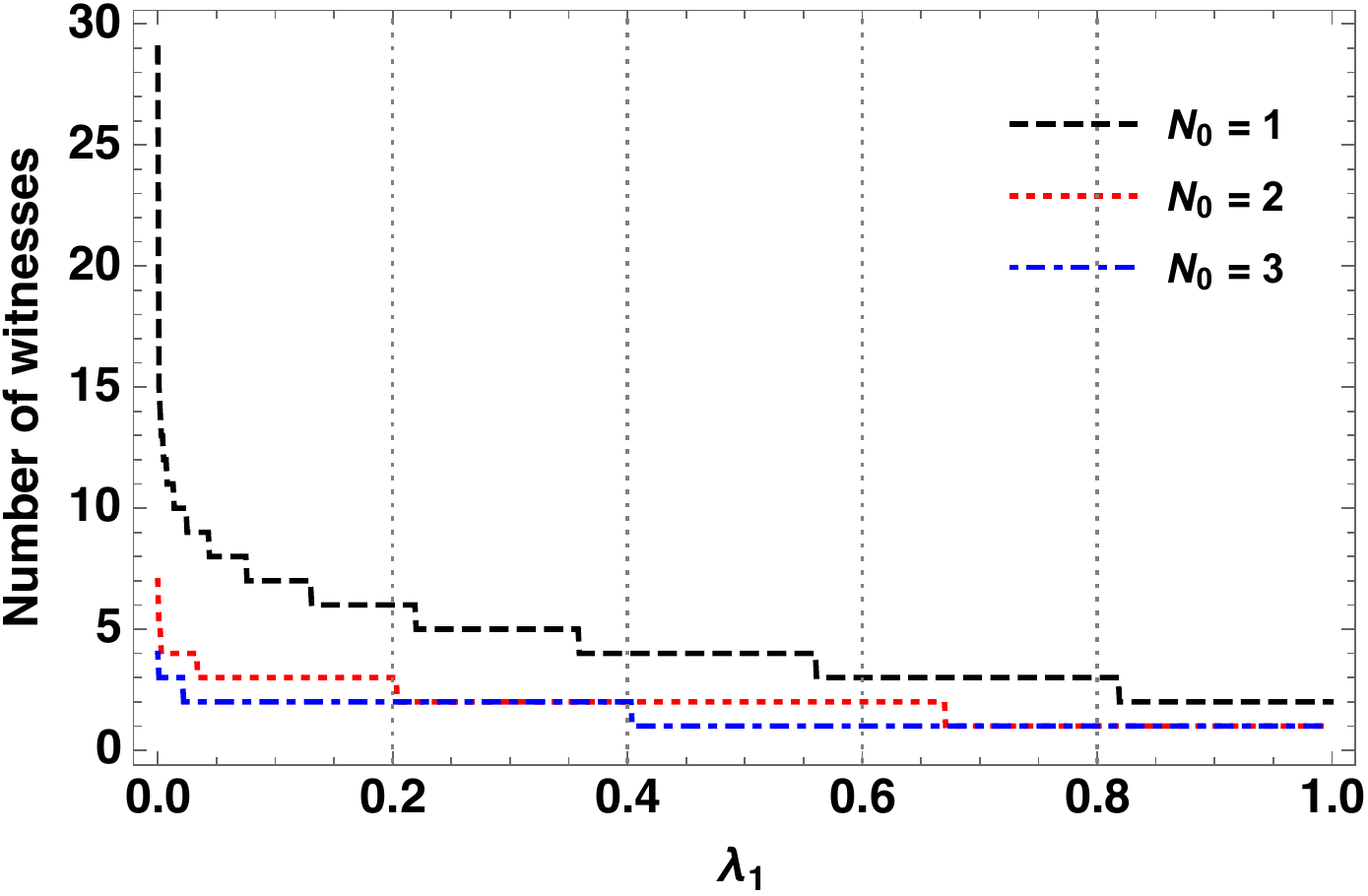}}
    \caption{Sequential witnessing of genuine multipartite entanglement and the sharpness parameter of measurements. The left panel corresponds to the scenario where all $N$ qubits are recycled. The right panel corresponds to the scenario of $N=3$, with the number of recycled qubits being varied. $\lambda_1$ denotes the sharpness parameter of the initial set of sequential observers. All quantities plotted are dimensionless.}
    \label{fig:my_label}
\end{figure*}

\subsection*{Hierarchical recycling of multipartite generalized GHZ states and a class of mixed states}
It is also possible to detect GME an unbounded number of times if the observers share the following class of $N$-partite states:
\begin{eqnarray}
    \rho_1=p_1|\text{gGHZ}_N\rangle\langle\text{gGHZ}_N|+p_2|00\ldots0\rangle\langle00\ldots&0&|\nonumber \\+(1-p_1-p_2)|11\ldots1\rangle\langle11\ldots1&|&,
\end{eqnarray}
where $p_1>0,~p_2\geq0, ~p_1+p_2 \leq 1$, and 
$$|\text{gGHZ}_N\rangle=\sqrt{a}|00\ldots0\rangle+\sqrt{1-a}|11\ldots1\rangle,$$
with $0<a\leq\frac{1}{2}.$
Notice that $\langle S_1 \rangle_{\rho_1}=p_1\sqrt{a(1-a)}$ and $\langle S^q_{2\theta,t} \rangle_{\rho_1}=1$. These imply that the sequence of sharpness parameters are given by
\begin{eqnarray}\label{raaz2}
    &\lambda^{N_0}_k& :=
      (1+\epsilon) \frac{2^{N_0 (k-1)}}{2^{N-2}p_1\sqrt{a(1-a)}}\Bigg[2^{N-1}-1 \nonumber \\ &-&\sum_{\theta,t}\tbinom{N_0}{2\theta-t} \tbinom{N-N_0}{t}\prod_{j=1}^{k-1} \left(\frac{1+\Lambda_j}{2}\right)^{2\theta-t}\Bigg],  
\end{eqnarray}
if $\lambda^{N_0}_{k-1} \in (0,1)$ and undefined otherwise, with $\lambda^{N_0}_1> 0,~\epsilon > 0$. As this sequence is just a  multiple of the sequence given in Eq. \eqref{raaz}, with the factor being independent of $N$, $N_0$, $k$, one can show in a manner similar to that in the proof of theorem~\ref{raahe} that it is possible to have $\lambda_k \in (0,1)$ for any $k\in \mathbb{N}.$ 
\section{Comparing different hierarchies of recycling}
In this section, we compare various hierarchical recycling scenarios of genuine multipartite entanglement, when the initial shared state is an $N$-party GHZ state for the specific measurement strategy adopted in this paper. Eq. \eqref{raaz} provides sequences of sharpness parameters to detect GME for all possible cases of hierarchical recycling, i.e., the number of qubits which are recycled by sequential observers  $(N_0)$ can be any number from 1 to $N$.
The sequence of sharpness parameters for $N_0=1$ reduces to
\begin{eqnarray}\label{comp}
\lambda_k=(1+\epsilon)2^{k-1}\left(1-Q_k\right), 
\end{eqnarray}
if $\lambda_{k-1}\in (0,1)$ and undefined otherwise, with $\lambda_1> 0,~\epsilon > 0$ and where $Q_k=\prod_{j=1}^{k-1}\frac{1+\Lambda_j}{2}$.
Notice that the sequence of sharpness parameters for $N_0=1$ is  independent of $N$, whereas for other values of $N_0$, the corresponding sequences, given in Eq. \eqref{raaz} are dependent on $N$ and $N_0$. Specifically, for the other extreme case, i.e., $N_0=N$, the sequence of sharpness parameters reduces to
\begin{eqnarray}\label{laptop}
    \lambda^{N}_k =
      (1+\epsilon) 2^{N (k-1)+1}\Bigg[1  -  \Bigg\{\left(\frac{1+Q_k}{2}\right)^N\nonumber \\+\left(\frac{1-Q_k}{2}\right)^N\Bigg\}\Bigg], 
\end{eqnarray}
if $\lambda^{N}_{k-1}\in (0,1)$ and undefined otherwise, with $\lambda^{N}_1> 0,~\epsilon > 0$. In the left panel of Fig. \ref{fig:my_label}, we depict the case $N_0=N$, and find that as $N$ increases, the number of sequential detections of GME either decreases or remains the same, but never increases for any given sharpness parameter used by the initial set of sequential observers. Thus, in the scenario where all spatially separated parties act sequentially on their qubits, increase in the number of qubits lowers the ``performance'' of sequential detection of GME under the considered measurement strategy. Here ``performance'' refers to the number of sequential detections of GME for a given measurement strategy. However, the number of sequential detections can be made $arbitrary$, for any value of $N$, by letting the sharpness parameter of the initial set of sequential observers tend to zero.
As the next specific case, we consider in the right panel of Fig. \ref{fig:my_label} the case of $N=3$. Here $N_0$ can be 1,2, or 3. The sequences of sharpness parameters for $N_0=1$ and 3 can be explicitly obtained from Eqs. \eqref{comp} and \eqref{laptop}, respectively. Whereas, it takes the following form for $N_0=2$:
\begin{eqnarray}
    \lambda^{2}_k=(1+\epsilon)\frac{4^{k-1}}{2}\left[3-Q_k^2-2Q_k\right],
\end{eqnarray}
 if $\lambda^{2}_{k-1}\in (0,1)$ and undefined otherwise, with $\lambda^2_1> 0,~\epsilon > 0$. It can be observed from the right panel of Fig. \ref{fig:my_label}, that with increase in $N_0$, the number of sequential detections may either decrease or remain same for any given $\lambda_1$. Thus, the performance is better for the case when the number of qubits that are being recycled, is low.



\section{Conclusion}
The method of entanglement witnesses can be employed to know about the presence of entanglement in a bipartite quantum system, and is interestingly a sought-after method to detect entanglement in experiments. Local observers perform measurements on their local systems and together evaluate the expectation values of the entanglement witness operators to guarantee the underlying entanglement. While measurements lead to the information gain about the system, they also disturb the underlying state. Thus, it may happen that witnessing entanglement cause the underlying state to lose entanglement.
This observation has led to the study of sequential detection of bipartite entanglement and Bell-nonlocal correlations.  These studies tell us about the fundamental limits on the recycled detection of  non-classical correlations and resources. 
The sequential detection scenario of the resources can potentially find advantage in quantum technologies where state preparation is a challenge.  

In this paper, we investigated sequential detection of resources in  multipartite quantum systems. Genuine multipartite entanglement is one of the most interesting non-classical correlations, and finds applications in multipartite quantum communication \cite{Cleve99,Zhao04} and quantum error correction tasks \cite{Gour07,Raissi18,Andrea21}. 
We considered a scenario where an $N$-qubit genuinely multipartite entangled state was shared among $N$ spatially separated observers such that each observer possessed a qubit.
Each party performed measurements on their respective qubit to detect genuine multiparty entanglement among all the $N$ spatially separated parties. An arbitrary subset of all the qubits was recycled after the measurements, so that sequential detection of GME is possible. The number of qubits that are recycled forms an hierarchy of sequential detection scenarios. 
We found that there exist multiparty quantum states for which the number of sequential detections of genuine multipartite entanglement is unbounded, for every member of the hierarchy. This result was shown to be valid for the case when the initial state shared among the spatially separated parties is the multipartite generalized Greenberger-Horne-Zeilinger states and is a class of mixed genuinely multipartite entangled states. Comparison among different hierarchical recycling scenarios were also drawn for the specific measurement strategy that resulted in an unbounded sequence of genuine multipartite entanglement detection. It was found that the case where only one qubit is recycled, the results are independent of the total number of qubits. For the case when all qubits were recycled, the number of sequential detections decreases with the total number of qubits.



\section*{Acknowledgement}
The research of CS was supported in part by the INFOSYS scholarship.~MP acknowledges the NCN (Poland) grant (grant number 2017/26/E/ST2/01008). The authors from Harish-Chandra Research Institute acknowledge partial support from the Department of Science and Technology, Government of India through the QuEST grant (grant number DST/ICPS/QUST/Theme-3/2019/120).

\bibliography{main}
\end{document}